\makeatletter
\newcommand{\dontusepackage}[2][]{%
  \@namedef{ver@#2.sty}{9999/12/31}%
  \@namedef{opt@#2.sty}{#1}}
\makeatother
\dontusepackage{subfigure}

\documentclass[]{article}

\usepackage{lmodern}
\usepackage{amssymb,amsmath}
\usepackage{amsthm}
\usepackage{ifxetex,ifluatex}
\usepackage[usenames,dvipsnames]{color}
\usepackage{fixltx2e} 
\ifnum 0\ifxetex 1\fi\ifluatex 1\fi=0 
  \usepackage[T1]{fontenc}
  \usepackage[utf8]{inputenc}
\else 
  \ifxetex
    \usepackage{mathspec}
    \usepackage{xltxtra,xunicode}
  \else
    \usepackage{fontspec}
  \fi
  \defaultfontfeatures{Mapping=tex-text,Scale=MatchLowercase}
  
\fi
\IfFileExists{upquote.sty}{\usepackage{upquote}}{}
\IfFileExists{microtype.sty}{%
\usepackage{microtype}
\UseMicrotypeSet[protrusion]{basicmath} 
}{}
\usepackage[margin=1.0in,bottom=1.5in]{geometry}
\usepackage[]{natbib}
\bibliographystyle{plainnat}
\usepackage{listings}
\lstset{
    basicstyle=\ttfamily\small,
    breaklines=true,
    prebreak=\raisebox{0ex}[0ex][0ex]{\ensuremath{\hookleftarrow}},
    frame=lines,
    showtabs=false,
    showspaces=false,
    showstringspaces=false,
    keywordstyle=\color[gray]{0.4}\bfseries,
    commentstyle=\color[gray]{0.65}\itshape,
    numbers=left,
    captionpos=b,
}
\usepackage{graphicx}
\makeatletter
\def\maxwidth{\ifdim\Gin@nat@width>\linewidth\linewidth\else\Gin@nat@width\fi}
\def\maxheight{\ifdim\Gin@nat@height>\textheight\textheight\else\Gin@nat@height\fi}
\makeatother
\setkeys{Gin}{width=\maxwidth,height=\maxheight,keepaspectratio}
\usepackage{caption}
\usepackage{float}
\setcounter{topnumber}{2}
\setcounter{bottomnumber}{2}
\setcounter{totalnumber}{4}
\setcounter{dbltopnumber}{2}    
\floatplacement{figure}{htbp}
\floatplacement{scholmdAlgorithm}{htbp}
\floatplacement{table}{htbp}
\usepackage{subfig}
\captionsetup[subfloat]{margin=1em}
\ifxetex
  \usepackage[setpagesize=false, 
              unicode=false, 
              xetex]{hyperref}
\else
  \usepackage[unicode=true]{hyperref}
\fi
\hypersetup{breaklinks=true,
            bookmarks=true,
            pdfauthor={},
            pdftitle={Minkowski sets for the regularization of inverse problems.},
            colorlinks=true,
            citecolor=black,
            urlcolor=blue,
            linkcolor=black,
            pdfborder={0 0 0}}
\urlstyle{same}  
\setlength{\emergencystretch}{3em}  
\setcounter{secnumdepth}{5}

\def\argmin{\mathop{\rm arg\min}}

\theoremstyle{definition}
\newtheorem{mydef}{Definition}

\theoremstyle{plain}
\newtheorem{myprop}{Proposition}

\usepackage{authblk}

\begin{document}
\title{Generalized Minkowski sets for the regularization of inverse problems.}
\author[1]{Bas Peters}
\author[2]{Felix J. Herrmann}

\affil[1]{University of British Columbia, Vancouver, Canada}
\affil[2]{Georgia Institute of Technology.}
\date{}
\maketitle

\begin{abstract}
Many works on inverse problems in the imaging sciences consider regularization via one or more penalty functions or constraint sets. When the models/images are not easily described using one or a few penalty functions/constraints, additive model descriptions for regularization lead to better imaging results. These include cartoon-texture decomposition, morphological component analysis, and robust principal component analysis; methods that typically rely on penalty functions. We propose a regularization framework, based on the Minkowski set, that merges the strengths of additive models and constrained formulations. We generalize the Minkowski set, such that the model parameters are the sum of two components, each of which is constrained to an intersection of sets. Furthermore, the sum of the components is also an element of another intersection of sets. These generalizations allow us to include multiple pieces of prior knowledge on each of the components, as well as on the sum of components, which is necessary to ensure physical feasibility of partial-differential-equation based parameters estimation problems. We derive the projection operation onto the generalized Minkowski sets and construct an algorithm based on the alternating direction method of multipliers. We illustrate how we benefit from using more prior knowledge in the form of the generalized Minkowski set using seismic waveform inversion and video background-anomaly separation.
\end{abstract}

\section{Introduction}\label{introduction}

The inspiration for this work is twofold. First, we want to build on the success of the regularization of inverse problems in the imaging sciences using intersections of constraint sets. This approach may be in the form of minimizing a differentiable function $f(m) : \mathbb{R}^N \rightarrow \mathbb{R}$, such that the estimated model parameters, $m \in \mathbb{R}^N$ are an element of the intersection of $p$ convex and possibly non-convex sets, $\bigcap_{i=1}^p \mathcal{C}_i$. The corresponding optimization problem reads $\min_m f(m) \:\: \text{s.t.} \:\: m \in \bigcap_{i=1}^p \mathcal{C}_i$. See, e.g., \citep{CMRseismic, Cseismicrecovery, PDTAF, smithyman2015constrained, doi:10.1190/tle36010094.1, Esser2016arch, TVWRI2, ournewpreprint}. Alternatively, we can also formulate certain types of inverse problems directly as a feasibility (also known as set-theoretic estimation) or projection problem \citep{STE_2, STE_1, STEstimation, COMBETTES1996155}. This approach solves $\min_m 1/2 \| x-m \|_2^2 \:\: \text{s.t.} \:\: m \in \bigcap_{i=1}^p \mathcal{C}_i$ or seeks any $m \in \bigcap_{i=1}^p \mathcal{C}_i$. The mentioned regularization techniques describe each piece of prior knowledge about the model parameters $m$ as a set. However, prior knowledge represented as a single set or as an intersection of different sets may not capture all we know. For instance, if the model contains oscillatory as well as blocky features. Because these are fundamentally different properties, working with one or multiple constraint sets alone may not able to express the simplicity of the entire model. 

Motivated by ideas from cartoon-texture decomposition (CTD) and morphological component analysis (MCA) \citep{doi:10.1137/S1540345902416247, doi:10.1137/110854989,cta_bpdn,cta_rank} and robust or sparse principal component analysis (RPCA) \citep{Candes:2011:RPC:1970392.1970395,rpca_td1,cta_td1}, we consider an additive model structure. Additive model structures decompose $m$ into two or more components---e.g., into a blocky cartoon-like component $u \in \mathbb{R}^N$ and an oscillatory texture-like component $v \in \mathbb{R}^N$. For this purpose, a penalty method is often used to state the decomposition problem as
\begin{equation}
\min_{u,v} \| m - u -v \| + \frac{\alpha}{2} \| A u \| + \frac{\beta}{2} \| B v \|.
\label{CTD}
\end{equation}
While this method has been successful, it requires careful choices for the penalty parameters $\alpha >0$ and $\beta >0$. These parameters determine how `much' of $m$ ends up in each component, but also depend on the noise level. In addition, the values of these parameters relate to the choices for the norms and the linear operators $A \in \mathbb{C}^{M_1 \times N}$ and
$B \in \mathbb{C}^{M_2 \times N}$. In this work, we are concerned with regularizing inverse problems by including multiple pieces of prior knowledge. For additive models, this means the number of terms in the sum and thus the number of penalty parameters will increase rapidly.

A constrained formulation can partially avoid the difficulties of selecting numerous penalty parameters because each constraint set can be defined independently of all other sets. This is contrary to penalty methods where the penalty parameters are trade-offs between the different penalty functions. Therefore, we explore the use of Minkowski sets for regularizing inverse problems. We require that the vector of model parameters, $m$, is an element of the Minkowski set $\mathcal{V}$, or vector sum of two sets $\mathcal{C}_1$ and $\mathcal{C}_1$, which is defined as
\begin{equation}
\mathcal{V} \equiv \mathcal{C}_1 + \mathcal{C}_1 = \{ m = u + v \: | \: u \in \mathcal{C}_1, v \in \mathcal{C}_2 \}.
\label{Minkowski_set}
\end{equation}
A vector $m$ is an element of $\mathcal{V}$ if it is the sum of vectors $u \in \mathcal{C}_1$ and $v \in \mathcal{C}_2$. Each set describes particular model properties for each component. Examples are total-variation, sparsity in a transform domain (Fourier, wavelets, curvelets, shearlets) or matrix rank. For practical reasons, we assume that all sets, sums of sets, and intersections of sets are non-empty, which implies that projection and feasibility problems have at least one solution. Our algorithm design uses the property that the sum of $p$ sets $\mathcal{C}_i$ is convex if all $\mathcal{C}_i$ are convex \citep[p. 24]{hiriart2012fundamentals}. We apply our set-based regularization to imaging inverse problems with the Euclidean projection operator:
\begin{equation}
\mathcal{P}_{\mathcal{V}} (m) \in \argmin_{x} \frac{1}{2} \| x - m \|_2^2 \quad \text{s.t} \quad x \in \mathcal{V}. 
\label{proj_Mink_basic}
\end{equation}
This projection allows us to use Minkowski constraint sets in algorithms
such as (spectral) projected gradient \citep[SPG,][]{Birgin:1999:NSP:588891.589081}, projected quasi-Newton
\citep[PQN,][]{schmidt2009optimizing}, projected Newton algorithms
\citep{doi:10.1137/0320018, pnmethods}, and proximal-map based
algorithms if we include the Minkowski constraint as an indicator
function. We define this indicator function for a set $\mathcal{V}$ as
\begin{equation}
    \iota_{\mathcal{V}}(m) =
    \begin{cases}
      0 & \text{if } m \in \mathcal{V},\\
      +\infty & \text{if } m \notin \mathcal{V},\\
\end{cases}
\label{indicator}
\end{equation}
 and the proximal map for a function
$g(m) : \mathbb{R}^N \rightarrow \mathbb{R} \cup \{+\infty\}$ as
$\operatorname{prox}_{\gamma,g}(m) = \argmin_x g(x) + \frac{\gamma}{2}\|x-m\|_2^2$,
with $\gamma >0$. The proximal map for the indicator function of a set
is the projection:
$\operatorname{prox}_{\gamma,\iota_\mathcal{V}}(m)= \mathcal{P}_{\iota_{\mathcal{V}}}(m)$.

While Minkowski sets and additive models are powerful regularizers, they lack certain critical features needed for solving problems that involve physical parameters. For instance, there is, in general, no guarantee that the sum of two or more components lies within lower and upper bounds or satisfies other crucial constraints that ensure physically realistic model estimates. It is also not straightforward to include multiple pieces of prior information on each component.

\subsection{Related work}
The above introduced decomposition strategies, morphological component analysis or cartoon-texture \citep{doi:10.1137/S1540345902416247,doi:10.1137/110854989,cta_bpdn,cta_rank} and robust or sparse principal component analysis \citep{Candes:2011:RPC:1970392.1970395,rpca_td1,cta_td1}), share the additive model construction with multiscale decompositions in image processing \citep[e.g.,][]{meyer2001oscillating, doi:10.1137/030600448}. While each of the sets that appear in a Minkowski sum can describe a particular
scale, this is not our primary aim or motivation. We use the
summation structure to build more complex models out of simpler ones,
more aligned with cartoon-texture decomposition and robust principal
component analysis.

Projections onto Minkowski sets also appear in computational geometry,
collision detection, and computer-aided design
\citep[e.g.,][]{Dobkin1993ComputingTI, VARADHAN2006343, LEE201614}, but
the problems and applications are different. In our case, sets describe
model properties and prior knowledge in $\mathbb{R}^N$. In computational
geometry, sets are often the vertices of polyhedral objects in
$\mathbb{R}^2$ or $\mathbb{R}^3$ and do not come with closed-form expressions for projectors or the Minkowski sum. We do not need to form the Minkowski set explicitly, and we show that projections onto the set are sufficient to regularize inverse problems.

\subsection{Contributions and outline}\label{contributions}

We propose a constrained regularization approach suitable for inverse problems with an emphasis on physical parameter estimation, where we describe the model parameters using multiple components. This implies that we need to work with multiple constraints for each component while offering assurances that the sum of the components also adheres to certain constraints. For this purpose, we introduce a generalization of the Minkowski set and a formulation void of penalty parameters. We formulate the projection on such generalized Minkowski sets and construct an algorithm based on the alternating direction method of multipliers, including a discussion on important algorithmic details. We show how to use the projector to formulate inverse problems based on these sets. 

The algorithms and numerical example are available as an open-source software package written in Julia\footnote{The software is build on top of the \texttt{SetIntersectionProjection} package available at \url{https://github.com/slimgroup/SetIntersectionProjection.jl}}. The algorithms and software are suitable for small 2D models, as well as for larger 3D geological models or videos, as we will show in the numerical examples section using seismic parameter estimation and video processing examples. These examples also demonstrate that the developed constrained problem formulation, algorithm, and software allow us to define constraints based on the entire 2D/3D model, but also simultaneously on slices/rows/columns/fibers of that model. This feature enables us to include more prior knowledge more directly into the inverse problem.

\section{Generalized Minkowski set}\label{Generalized-minkowski-set}

It is challenging to select a single constraint set or intersection of multiple sets to describe models and images that contain distinct morphological components $u$ and $v$. While the Minkowski set allows us to define different constraint sets for the different components, limitations come to light when working with physical parameter estimation applications.

For instance, there is usually prior knowledge about the physically realistic values in $m \in \mathbb{R}^N$. Moreover, the successful applications of constrained formulations that we discussed in the introduction all use multiple constraints on the model parameters, and we want to combine that concept with constraints on the components.

The second extension of the basic concept of a Minkowski set is that we allow the constraint set on each component to be an intersection of multiple sets as well. In this way, we can include multiple pieces of prior information about each component.

We propose the generalized Minkowski constraint set for the regularization of inverse problems as
\begin{mydef}[Generalized Minkowski Set]
A vector $m \in \mathbb{R}^N$ is an element of the generalized Minkowski set,
\begin{equation}
\mathcal{M} \equiv \{ m = u + v \: | \: u \in \bigcap_{i=1}^p \mathcal{D}_i, \: v \in \bigcap_{j=1}^q \mathcal{E}_j, \: m \in \bigcap_{k=1}^r \mathcal{F}_k \},
\label{Minkowski_general}
\end{equation}
if $m$ is an element of the intersection of $r$ sets $\mathcal{F}_k$ and also the sum of two
components $u \in \mathbb{R}^N$ and $v \in \mathbb{R}^N$. The vector $u$
is an element of the intersection of $p$ sets $\mathcal{D}_i$ and $v$ is an
element of the intersection of $q$ sets $\mathcal{E}_j$.
\end{mydef}
The convexity of $\mathcal{M}$ follows from the properties of the sets
$\mathcal{D}_i$, $\mathcal{E}_j$ and $\mathcal{F}_k$. 

\begin{myprop}
The generalized Minkowski set is convex if $\mathcal{D}_i$, $\mathcal{E}_j$ and $\mathcal{F}_k$ are convex sets for all $i$, $j$ and $k$.
\end{myprop}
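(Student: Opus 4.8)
The plan is to observe that $\mathcal{M}$ is, despite its three-part description, nothing more than the intersection of two convex sets, and then to invoke two elementary facts: intersections of convex sets are convex, and (as already cited from \citep[p. 24]{hiriart2012fundamentals}) Minkowski sums of convex sets are convex.

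First I would rewrite the definition \eqref{Minkowski_general} in the equivalent form
\begin{equation*}
\mathcal{M} = \Big( \bigcap_{i=1}^p \mathcal{D}_i + \bigcap_{j=1}^q \mathcal{E}_j \Big) \cap \bigcap_{k=1}^r \mathcal{F}_k ,
\end{equation*}
simply by unpacking the set-builder notation: $m \in \mathcal{M}$ holds precisely when $m$ admits a decomposition $m = u + v$ with $u \in \bigcap_i \mathcal{D}_i$ and $v \in \bigcap_j \mathcal{E}_j$ (that is, $m$ lies in the Minkowski sum of these two intersections) and, simultaneously, $m \in \bigcap_k \mathcal{F}_k$. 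From here the conclusion assembles in three steps: each of $\bigcap_i \mathcal{D}_i$, $\bigcap_j \mathcal{E}_j$, $\bigcap_k \mathcal{F}_k$ is convex as an intersection of convex sets; their relevant Minkowski sum is convex by the cited result; and $\mathcal{M}$ is then the intersection of two convex sets, hence convex.

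Alternatively — and this may read more cleanly — I would give the direct verification. Take $m_1, m_2 \in \mathcal{M}$ with witnesses $m_1 = u_1 + v_1$ and $m_2 = u_2 + v_2$ satisfying the membership requirements, fix $\lambda \in [0,1]$, and set $u_\lambda := \lambda u_1 + (1-\lambda) u_2$ and $v_\lambda := \lambda v_1 + (1-\lambda) v_2$. Convexity of each $\mathcal{D}_i$ gives $u_\lambda \in \bigcap_i \mathcal{D}_i$; convexity of each $\mathcal{E}_j$ gives $v_\lambda \in \bigcap_j \mathcal{E}_j$; linearity gives $u_\lambda + v_\lambda = \lambda m_1 + (1-\lambda) m_2$; and convexity of each $\mathcal{F}_k$ gives $\lambda m_1 + (1-\lambda) m_2 \in \bigcap_k \mathcal{F}_k$. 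Hence $\lambda m_1 + (1-\lambda) m_2 \in \mathcal{M}$.

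There is no genuine obstacle in this proof. The only point deserving a moment's care is that a single convex-combination parameter $\lambda$ must serve simultaneously for the two components and for their sum, which is immediate from linearity of $(u,v) \mapsto u + v$. If anything, the ``hard part'' is purely expository: whether to route the argument through the packaged lemma on Minkowski sums of convex sets or to spell out the short direct verification. I would lean toward the direct verification, since it keeps the argument self-contained and makes transparent exactly where each of the three convexity hypotheses is used.
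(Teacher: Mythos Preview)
Your proposal is correct and your first approach is essentially identical to the paper's proof: rewrite $\mathcal{M}$ as the intersection of $\bigcap_k \mathcal{F}_k$ with the Minkowski sum $\bigcap_i \mathcal{D}_i + \bigcap_j \mathcal{E}_j$, note each intersection is convex, cite \citep[p.~24]{hiriart2012fundamentals} for convexity of the sum, and conclude. The direct convex-combination verification you offer as an alternative is a harmless unpacking of the same argument, not a different route.
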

\begin{proof}
It follows directly from the definition that $\bigcap_{i=1}^p \mathcal{D}_i$, $\bigcap_{j=1}^q \mathcal{E}_j$, and
$\bigcap_{k=1}^r \mathcal{F}_k$ are convex sets. As a result, the Minkowski sum
$\bigcap_{i=1}^p \mathcal{D}_i + \bigcap_{j=1}^q \mathcal{E}_j$ is
is also convex by because it is the sum of two convex sets \citep[p. 24]{hiriart2012fundamentals}. Finally, It follows that $\mathcal{M}$ is a
convex set, because it is the intersection of the convex intersection $\bigcap_{k=1}^r \mathcal{F}_k$, and the convex sum  $\bigcap_{i=1}^p \mathcal{D}_i + \bigcap_{j=1}^q \mathcal{E}_j$. 
\end{proof}

To summarize in words, $m$ is an element of the
intersection of two convex sets, one is the convex Minkowski sum, the
other is a convex intersection. The set $\mathcal{M}$ is therefore also
convex. Note that convexity and
closedness of $\bigcap_{i=1}^p \mathcal{D}_i$ and
$\bigcap_{j=1}^q \mathcal{E}_j$ does not imply their sum is closed. 

It is conceptually straightforward to extend set
definition~\eqref{Minkowski_general} to a sum of three or more components,
but we work with two components for the remainder of this paper for
notational convenience. In the discussion section, we highlight some
potential computational challenges that come with a generalized
Minkowski sets of more than two components.

\section{Projection onto the generalized Minkowski
set}\label{projection-onto-the-generalized-minkowski-set}

In the following section, we show how to use the generalized Minkowski
set, Equation~\eqref{Minkowski_general}, to regularize inverse problems
with computationally cheap or expensive forward operators. First, we
need to develop an algorithm to compute the projection onto
$\mathcal{M}$. Using the projection operator, $\mathcal{P}_{\mathcal{M}}(\cdot)$, we can formulate inverse problems as a
projection directly. Alternatively, we can use the projection operator inside
projected gradient/Newton-type algorithms. Each constraint set
definition may include a linear operator (the transform-domain operator)
in its definition. We make the linear operators explicit, because the projection operator corresponding to, for example,
$\{ x \: | \: \|x\|_2 \leq \sigma \}$, is available in closed form and
easy to compute, but the projection onto
$\{ x \: | \: \|A x\|_2 \leq \sigma \}$ is not when $AA^T \neq \alpha I$
for $\alpha >0$, see examples in, e.g., \cite{prox_split,OPT-003}, \citealp[ ch. 6
\&
7]{doi:10.1137/1.9781611974997}, \cite{doi:10.1080/10556788.2017.1304548}.
Let us introduce the linear operators $A_i \in \mathbb{R}^{M_i \times N}$,
$B_j \in \mathbb{R}^{M_j \times N}$, and
$C_k \in \mathbb{R}^{M_k \times N}$. With indicator functions and
exposed linear operators, we formulate the projection of
$m \in \mathbb{R}^N$ onto set~\eqref{Minkowski_general} as
\begin{equation}
\begin{aligned}
\min_{u,v,w} \frac{1}{2} \| w - m \|_2^2 + \sum_{i=1}^p \iota_{\mathcal{D}_i}(A_i u) + \sum_{j=1}^q \iota_{\mathcal{E}_j}(B_j v) + \sum_{k=1}^r \iota_{\mathcal{F}_k}(C_k w) + \iota_{w=u+v}(w,u,v),
\end{aligned}
\label{proj_Minkowski}
\end{equation}
 where $\iota_{w=u+v}(w,u,v)$ is the indicator function for the equality
constraint $w=u+v$ that occurs in the definition of $\mathcal{M}$. The
sets $\mathcal{D}_i$, $\mathcal{E}_j$, and $\mathcal{F}_k$ have the same
role as in the previous section. The above problem is the minimization
of a sum of functions acting on different as well as shared variables.
We recast it in a standard form such that we can solve it using
algorithms based on the alternating direction method of multipliers \citep[ADMM, e.g.,][]{Boyd:2011:DOS:2185815.2185816,eckstein2015understanding}. Rewriting in a standard form allows
us to benefit from recently proposed schemes for selecting algorithm
parameters that decrease the number of iterations and lead to more
robust algorithms in case we use non-convex sets \citep{empiricalncvxadmm,Xu_2017_CVPR}. As a first step, we
introduce the vector $x \in \mathbb{R}^{2N}$ that stacks two out of the
three optimization variables in~\eqref{proj_Minkowski} as
\begin{equation}
\begin{split}
x \equiv \begin{pmatrix} u \\ v \end{pmatrix}.
\end{split}
\label{x_def}
\end{equation}
 We substitute this new definition in Problem~\eqref{proj_Minkowski} and
eliminate the equality constraints $w=u+v$ to arrive at
\begin{equation}
\begin{aligned}
\min_{x} \frac{1}{2} \| \begin{pmatrix} I_N \:\:\: I_N \end{pmatrix} x - m \|_2^2 + \sum_{i=1}^p \iota_{\mathcal{D}_i}(\begin{pmatrix} A_i \:\:\: 0 \end{pmatrix}x) + \sum_{j=1}^q \iota_{\mathcal{E}_j}(\begin{pmatrix} 0 \:\:\: B_j \end{pmatrix} x) + \sum_{k=1}^r \iota_{\mathcal{F}_k}(\begin{pmatrix} C_k \:\:\: C_k \end{pmatrix} x),
\end{aligned}
\label{proj_Minkowski_2}
\end{equation}
 where $0_{M_i \times N} \in \mathbb{R}^{M_i \times N}$ indicate all
zeros matrices of appropriate dimensions. Next, we take the linear
operators out of the indicator function such that we end up with sub-problems that are usually projections with closed-form solutions. Thereby we avoid the need for nesting iterative algorithms to solve sub-problems related to the indicator functions of constraint sets. 

To separate indicator functions and linear operators, we introduce additional vectors $y_i$ for $i \in \{ 1,2,\dots,p+q+r+1 \}$ of appropriate dimensions. From now on we use $s = p+q+r+1$ to shorten notation. With the new variables,
we rewrite problem formulation~\eqref{proj_Minkowski_2} and add linear
equality constraints to obtain
\begin{equation}
\begin{aligned}
\min_{\{y_i \},x} \frac{1}{2} \| y_s - m \|_2^2 + \sum_{i=1}^p \iota_{\mathcal{D}_i}(y_i) + \sum_{j=1}^q \iota_{\mathcal{E}_j}(y_{j+p}) + \sum_{k=1}^r \iota_{\mathcal{F}_k}(y_{k+p+q}) \:\: \text{s.t.} \:\: \tilde{A} x = \tilde{y},
\end{aligned}
\label{proj_Minkowski_3}
\end{equation}
 where
\begin{equation*}
\begin{aligned}
\tilde{A} x = \tilde{y} \Leftrightarrow
\begin{pmatrix}
A_1 & 0 \\
\vdots & 0 \\
A_p & 0 \\
0 & B_1 \\
0 & \vdots \\
0 & B_q \\
C_1 & C_1 \\
\vdots & \vdots \\
C_r & C_r \\
I_N & I_N \\
\end{pmatrix}
\begin{pmatrix}
u \\
v
\end{pmatrix}
=
\begin{pmatrix}
y_1\\
\vdots\\
y_p\\
y_{p+1}\\
\vdots\\
y_{p+q}\\
y_{p+q+1}\\
\vdots\\
y_{p+q+r}\\
y_{p+q+r+1}\\
\end{pmatrix}.
\end{aligned}
\end{equation*}
 Now define the new function
\begin{equation}
\tilde{f}(\tilde{y},m) \equiv \sum_{i=1}^s f_i(y_i,m) \equiv \frac{1}{2} \| y_s - m \|_2^2 + \sum_{i=1}^p \iota_{\mathcal{D}_i}(y_i) + \sum_{j=1}^q \iota_{\mathcal{E}_j}(y_{j+p}) + \sum_{k=1}^r \iota_{\mathcal{F}_k}(y_{k+p+q}),
\label{f_def}
\end{equation}
 such that we obtain the projection problem in the form
\begin{equation}
\min_{x,\tilde{y}} \tilde{f}(\tilde{y},m) \:\: \text{s.t.} \:\: \tilde{A} x = \tilde{y}.
\label{proj_Minkowski_4}
\end{equation}
 If $x$ and $\tilde{y}$ are a solution to this problem, the equality
constraints enforce $u + v = y_{p+q+r+1}$ and we recover the projection
of $m$ as $y_{p+q+r+1}$ or as
$\begin{pmatrix} I_N \:\: I_N \end{pmatrix} x$. Now that
Problem~\eqref{proj_Minkowski_4} is in a form that we can solve with the ADMM algorithm, we proceed by writing down the augmented
Lagrangian for Problem~\eqref{proj_Minkowski_4} \citep[Chapter
17]{Nocedal:2000} as
\begin{equation}\label{aug_lag}
L_{\rho_1, \dots, \rho_s} (x, y_1, \dots, y_s, v_1, \dots, v_s) = \sum_{i=1}^s \bigg[ f_i(y_i) + v_i^T (y_i - A_i x) + \frac{\rho_i}{2} \| y_i - A_i x \|^2_2 \bigg],
\end{equation}
 where $\rho_i >0$ are the augmented Lagrangian penalty parameters and
$v_i \in \mathbb{R}^{M_i}$ are the vectors of Lagrangian multipliers. We
denote a block-row of the matrix $\tilde{A}$ as $A_i$. The
relaxed ADMM iterations with relaxation parameters $\gamma_i \in (0,2]$
and iteration counter $l$ are given by
\begin{align}\label{PMinkowski_iters}
&x^{l+1} = \arg\min_{x} \sum_{i=1}^s \frac{\rho_i^l}{2} \| y_i^{l} - A_i x + \frac{v_i^l}{\rho_i^l} \|^2_2 = \Big[ \sum_{i=1}^{s} ( \rho_i^l A_i^T A_i ) \Big]^{-1} \sum_{i=1}^s \Big( A_i^T(\rho_i^l y_i^{l} + v_i^l) \Big) \nonumber \\
&\text{compute for} \:\: i \in  \{1,\dots,s\} \:\: \text{independently in parallel:}\nonumber \\ 
&\bar{x}_i^{l+1} = \gamma_i^l A_i x_i^{l+1} + ( 1-\gamma_i^l ) y_i^{l}\\
&y_i^{l+1} = \arg\min_{y_i} \Big[ f_i(y_i) + \frac{\rho_i^l}{2} \| y_i^{l} - \bar{x}_i^{l+1} + \frac{v_i^l}{\rho_i^l} \|^2_2 \Big] = \operatorname{prox}_{f_i,\rho_i}(\bar{x}_i^{l+1} - \frac{v_i^l}{\rho_i^l} )\nonumber\\ 
&v_i^{l+1} = v_i^l + \rho_i^l (y_i^{l+1} - \bar{x}_i^{l+1}). \nonumber
\end{align}

These iterations are similar to the Simultaneous Direction Method of
Multipliers \citep[SDMM,][]{prox_split,CoSpSDMM}) and the SALSA
algorithm \citep{SalsaPaper}. The difference is that we have an additional
relaxation step and use a separate augmented Lagrangian penalty parameter and relaxation parameter for each index $i$ that corresponds to a different set. The structure of these iterations look identical to the algorithm
presented by \cite{peters2019algorithms} to compute the projection
onto an intersection of sets, but here we solve a different problem and
have different block-matrix structures. We briefly mention the main properties
of each sub-problem.

\textbf{$x^{l+1}$ computation}. This step is the solution of a large,
sparse, square, and symmetric linear system. The
system matrix has the following block structure:
\begin{equation}
\begin{aligned}
&Q \equiv \sum_{i=1}^{s} ( \rho_i A_i^T A_i ) =\\
&\begin{pmatrix} \sum_{i=1}^p \rho_i A_i^T A_i + \sum_{k=1}^r \rho_k C_k^T C_k + \rho_s I_N & \sum_{k=1}^r \rho_k C_k^T C_k + \rho_s I_N \\ \sum_{k=1}^r \rho_k C_k^T C_k + \rho_s I_N & \sum_{j=1}^q \rho_j B_j^T B_j + \sum_{k=1}^r \rho_k C_k^T C_k + \rho_s I_N \end{pmatrix}.
\end{aligned}
\label{system-mat}
\end{equation}
 This matrix is positive-definite if $\tilde{A}$ has full
column rank. We assume this is true in the remainder because we always use bound constraints on each of the components. This implies that at least one of the $A_i$ and one of the $B_i$ is the identity matrix, which makes $Q$ positive-definite. We compute $x^{l+1}$ with the
conjugate gradient (CG) method, warm started by $x^l$ as the initial guess.
We choose CG instead of an iterative method for least-squares problems
such as LSQR \citep{Paige:1982:LAS:355984.355989} because solvers for
least-squares work with $A$ and $A^T$ separately and
need to compute a matrix-vector product (MVP) with each $A_i$ and
$A_i^T$ at every iteration. This becomes computationally
expensive if there are many linear operators, as is the case for our
problem. CG uses a single MVP with $Q$ per iteration. The cost of this
MVP does not increase if we add orthogonal matrices to $\tilde{A}$ (DCT, DFT, various wavelet transforms). If
the matrices in $\tilde{A}$ have (partially) overlapping sparsity
patterns (matrices based on discrete derivatives for exmple), the cost also does not increase (much). We pre-compute all
$A_i^T A_i$ for fast updating of $Q$ when one or more of
the $\rho_i$ change (see below).

\textbf{$y_i^{l+1}$ computation}. For every index $i$, we can compute
$\operatorname{prox}_{f_i,\rho_i}(\bar{x}_i^{l+1} - \frac{v_i^l}{\rho_i^l} )$
independently in parallel. For indices $i \in \{1,2,\dots,s-1\}$, the
proximal maps are projections onto sets $\mathcal{D}$, $\mathcal{E}$ or
$\mathcal{F}$. These projections do not include linear operators, and we know the solutions in
closed form (e.g., $\ell_1$-norm, $\ell_2$-norm, rank, cardinality, bounds).

\textbf{$\rho_i^{l+1},\gamma_i^{l+1}$ updates}. We use the updating
scheme for $\rho$ and $\gamma$ from adaptive-relaxed ADMM, introduced by
\citet{Xu_2017_CVPR}. Numerical results show that this updating scheme
accelerates the convergence of ADMM
\citep{pmlr-v54-xu17a, Xu_2017_CVPR, pmlr-v70-xu17c}, also is also
robust when solving some non-convex problems \citep{empiricalncvxadmm}.
We use a different relaxation and penalty parameter for each function
$f_i(y_i)$, as do \citet{Song:2016:FAA:3015812.3015924};
\citet{pmlr-v70-xu17c}, which allows
$\rho_i$ and $\gamma_i$ to adapt to the various linear operators of
different dimensions that correspond to each constraint set.

\textbf{Parallelism and communication}. The only serial part of the
algorithm defined in~\eqref{PMinkowski_iters} is the $x^{l+1}$
computation. We use multi-threaded MVPs in the compressed diagonal
format if $Q$ has a banded structure. The other parts of the iterations, $y_i^{l+1}$,
$v_i^{l+1}, `\rho_i^{l+1}, \gamma_i^{l+1}$, are all independent so we
can compute them in parallel for each index $i$. There are two
operations in~\eqref{PMinkowski_iters} that require communication between
workers that carry out computations in parallel. We need to send
$x^{l+1}$ to every worker that computes a $y_i^{l+1}$, $v_i^{l+1}$,
$\rho_i^{l+1}$, and $\gamma_i^{l+1}$. The second and last piece of
communication is the map-reduce parallel sum to form the right-hand side
for the next iteration when we compute $x^{l+1} = \sum_{i=1}^s \Big( A_i^T(\rho_i^l y_i^{l} + v_i^l) \Big)$.

In practice, we will use the proposed algorithm to solve problems that often involve non-convex sets. Therefore, we do not provide guarantees that
algorithms like ADMM behave as expected, because their convergence
proofs typically require closed, convex and proper functions, see, e.g.,
\cite{Boyd:2011:DOS:2185815.2185816,eckstein2015understanding}. This is not a point of great concern
to us, because the main motivation to base our algorithm on ADMM is
rapid empirical convergence, ability to deal with many
constraint sets efficiently, and strong empirical performance in case of non-convex
sets.

\section{Formulation of inverse problems with generalized Minkowski
constraints}\label{formulation-of-inverse-problems-with-generalized-minkowski-constraints}

So far, we proposed a generalization of the Minkowski set ($\mathcal{M}$,
equation~\eqref{Minkowski_general}), and developed an algorithm to compute
projections onto this set. The next step is to solve inverse problems where
the generalized Minkowski set describes the prior knowledge. Therefore, we need to combine the set $\mathcal{M}$ with a data-fitting procedure, for which we discuss two formulations. One is primarily suitable
when the data-misfit function is computationally expensive to evaluate. The second formulation is for inverse
problems where the forward operator is both linear and computationally
inexpensive to evaluate. We discuss the two approaches in more detail
below.

\subsection{Inverse problems with computationally expensive data-misfit
evaluations}\label{inverse-problems-with-computationally-expensive-data-misfit-evaluations}

We consider a non-linear and possibly non-convex data-misfit function
$f(m) : \mathbb{R}^N \rightarrow \mathbb{R}$ that depends on model
parameters $m \in \mathbb{R}^N$. Our assumptions for this inverse
problem formulation is that the computational budget allows for much
fewer data-misfit evaluations than the required number of iterations to project
onto the generalized Minkowski set, as defined in \eqref{PMinkowski_iters}.
We can deal with this imbalance by attempting to make as much progress
towards minimizing $f(m)$, while always satisfying the constraints. The
minimization of the data-misfit, subject to satisfying the generalized
Minkowski constraint is then formulated as
\begin{equation}
\min_m f(m) \:\: \text{s.t.} \:\: m \in \mathcal{M}.
\label{nonlin_inv_prob}
\end{equation}
 If we solve this problem with algorithms that use a projection onto
$\mathcal{M}$ at every iteration, the model parameters $m$ satisfy the constraints at every iteration; a property desired by
several works in non-convex geophysical parameter estimation \citep{smithyman2015constrained, doi:10.1190/tle36010094.1, Esser2016arch, TVWRI2,ournewpreprint}. These works obtain better model reconstructions from non-convex problems
by carefully changing the constraints during the data-fitting procedure.
The first two numerical experiments in this work use the spectral
projected gradient algorithm \citep[SPG,][]{Birgin:1999:NSP:588891.589081,doi:10.1093/imanum/23.4.539}). At iteration $l$, SPG updates the model as
\begin{equation}
m^{l+1} = (1-\gamma) m^l - \gamma \mathcal{P}_{\mathcal{M}} (m^l - \alpha \nabla_{m}f(m^l)),
\label{SPG_iter2}
\end{equation}
 where $\mathcal{P}_{\mathcal{M}}$ is the Euclidean projection onto
$\mathcal{M}$. The Barzilai-Borwein \citep{BARZILAI01011988} step-length
$\alpha>0$ is a scalar approximation of the Hessian that is informed by
previous model estimates and gradients of $f(m)$. A non-monotone
line-search estimates the scalar $\gamma \in (0,1]$ and prevents $f(m)$
from increasing too many iterations in sequence. The line-search
back-tracks between two points in a convex set if $\mathcal{M}$ is
convex and the initial $m_0$ is feasible, so every line-search iterate
is feasible by construction. SPG thus requires a single projection onto
$\mathcal{M}$ if all constraint sets that construct the generalized Minkowski set are convex.

\subsection{Linear inverse problems with computationally cheap forward
operators}\label{linear-inverse-problems-with-computationally-cheap-forward-operators}

Contrary to the previous section, we now assume a linear relationship
between the model parameters $m \in \mathbb{R}^N$ and the observed data,
$d_\text{obs} \in \mathbb{R}^M$. The second assumption, for the problem formulation in this section, that the evaluation of the linear forward operator is not much more time
consuming than other computational components in the iterations from
\eqref{PMinkowski_iters}. Examples of such operators
$G \in \mathbb{R}^{M \times N}$ are masks and
blurring kernels. We may then put data-fitting and regularization on the
same footing and formulate an inverse problem with constraints as a
feasibility or projection problem. Both these formulations add a
data-fit constraint to the constraints that describe model properties, e.g., \citep{STE_2, STE_1, STEstimation, COMBETTES1996155}.
There are many data-fit constraint sets, including the point-wise set
$\mathcal{G}^{\text{data}} \equiv \{ m \: | \: l[i] \leq (G m - d_\text{obs})[i] \leq u[i]\}$
with lower and upper bounds on the misfit. We use the notation $l[i]$
for entry $i$ of the lower-bound vector $l$. The data-fit constraint can
be any set onto which we know how to project. An example of a global
data-misfit constraint is the norm-based set
$\mathcal{G}^{\text{data}} \equiv \{ m \: | \: \sigma_l \leq \| G m - d_\text{obs} \| \leq \sigma_u \}$
with scalar bounds $\sigma_l < \sigma_u$. This set is non-convex if
$\sigma_l > 0$, i.e., the annulus constraint in case of the $\ell_2$
norm. This set has a `hole' in the interior of the set that explicitly
avoids fitting the data noise in the $\ell_2$ sense.

We denote our formulation of a linear inverse problem with a data-fit constraint, and a generalized Minkowski set constraint \eqref{Minkowski_general} on the model parameters as
\begin{equation}
\min_{x} \frac{1}{2} \| x - m \|_2^2 \quad \text{s.t.} \quad \begin{cases} x \in \mathcal{M} \\
x \in \mathcal{G}^{\text{data}}
\end{cases}.
\label{LIP_Minkowski}
\end{equation}
 The solution is the projection of an initial guess, $m$, onto the
intersection of a data-fit constraint and a generalized Minkowski
constraint on the model parameters. As before, there are constraints on
the model $x$, as well as the components $u$ and $v$.
Problem~\eqref{LIP_Minkowski} is the same as before in
Equation~\eqref{Minkowski_general} and we can solve it with the
algorithm from the previous section. In the current case, we have
one additional constraint on the sum of the components.

\section{Numerical examples}\label{numerical-examples}
The algorithm and examples are available open-source at \url{https://petersbas.github.io/GeneralizedMinkowskiSetDocs/}. The Julia software is build on-top of the \texttt{SetIntersectionProjection}\footnote{\url{https://github.com/slimgroup/SetIntersectionProjection.jl}} package \citep{peters2019algorithms}. 

\subsection{Seismic full-waveform inversion
1}\label{seismic-full-waveform-inversion-1}

We start with a simple experiment and show how
a Minkowski set describes the provided prior knowledge naturally and results in a better model estimate compared to a single constraint set
or intersection of multiple sets. The problem is to estimate the
acoustic velocity $m \in \mathbb{R}^N$ of the model in
Figure~\ref{Fig:FWI-results} from observed seismic data modeled by the
Helmholtz equation. This problem, known as full-waveform inversion \citep[FWI,][]{TarantolaA,Pratt98,virieux09}, is often
formulated as the minimization of a differentiable, but non-convex
data-fit
\begin{equation}
f(m) = \frac{1}{2} \| d_\text{predicted}(m) - d_\text{observed} \|_2^2,
\label{FWI_obj}
\end{equation}
 where the partial-differential-equation constraints are already
eliminated and are part of $d_\text{predicted}(m)$, see, e.g.,
\citet{Haber2000}. The observed data, $d_\text{observed}$ are discrete
frequencies of $\{3.0,6.0,9.0\}$ Hertz.

Figure~\ref{Fig:FWI-results} shows the true model, initial guess for
$m$, and the source and receiver geometry. We assume prior information
about the bounds on the parameter values, and that the anomaly has a
rectangular shape with a lower velocity than the background.

The results in Figure~\ref{Fig:FWI-results} using bounds or bounds and
the true anisotropic total-variation (TV) as a constraint, do not lead
to a satisfying model estimate. The diagonally shaped model
estimates are mostly due to the source and receiver positioning, known
as vertical seismic profiling (VSP) in geophysics. Here we will show that the generalized Minkowski set $\mathcal{M}$ \eqref{Minkowski_general} leads to an improved
model estimate using convex sets only. If we have the prior
knowledge that the anomaly we need to find has a lower velocity than the
background medium, we can easily include this information as a
Minkowski set. The following four sets summarize our prior knowledge:

\begin{enumerate}
\def\labelenumi{\arabic{enumi}.}
\itemsep1pt\parskip0pt\parsep0pt
\item
  $\mathcal{F}_1 = \{ x \: | \: 2350 \leq x[i] \leq 2550 \}$ : bounds on sum
\item
  $\mathcal{F}_2 = \{ x \: | \: \| ( (D_z \otimes I_x)^T \:\: (I_z \otimes D_x)^T )^T x \|_1 \leq \sigma \}$ : anisotropic total-variation on sum
\item
  $\mathcal{D}_1 = \{ u \: | \: -150 \leq u[i] \leq 0 \}$ : bounds on anomaly
\item
  $\mathcal{E}_1 = \{ v \: | \: v[i] = 2500 \}$ : bounds on background
\end{enumerate}

The generalized Minkowski set combines the four above sets as $( \mathcal{F}_1 \bigcap \mathcal{F}_2 ) \bigcap (\mathcal{D}_1 + \mathcal{E}_1)$.
In words, we fix the background velocity, require any anomaly to be
negative, and the total model estimate has to satisfy bound constraints
and have a low anisotropic total-variation. To minimize the data-misfit
subject to the generalized Minkowski constraint,
\begin{equation}
\min_m \frac{1}{2} \| d_\text{predicted}(m) - d_\text{observed} \|_2^2 \:\: \text{s.t.} \:\: m \in ( \mathcal{F}_1 \bigcap \mathcal{F}_2 ) \bigcap (\mathcal{D}_1 + \mathcal{E}_1),
\label{FWI_obj_mink}
\end{equation}
 we use the SPG algorithm as described in the previous section, with
$15$ iterations and a non-monotone line search with a memory of five
function values. The result that uses the generalized Minkowski
constraint (Figure~\ref{Fig:FWI-results}e) is much better compared to
bounds and the correct total-variation because the constraints on the
sign of the anomaly prevent incorrect high-velocity artifacts.

\begin{figure}
\centering
\captionsetup[subfigure]{labelformat=empty}
\subfloat[]{\includegraphics[width=0.330\hsize]{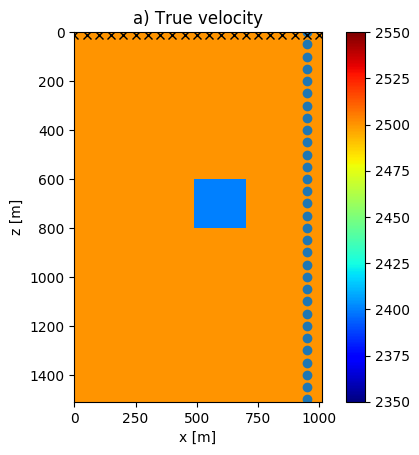}}
\subfloat[]{\includegraphics[width=0.330\hsize]{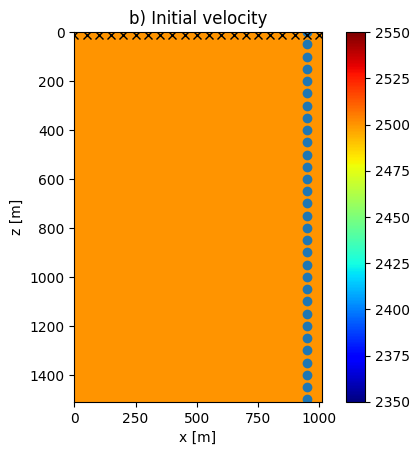}}
\\
\subfloat[]{\includegraphics[width=0.330\hsize]{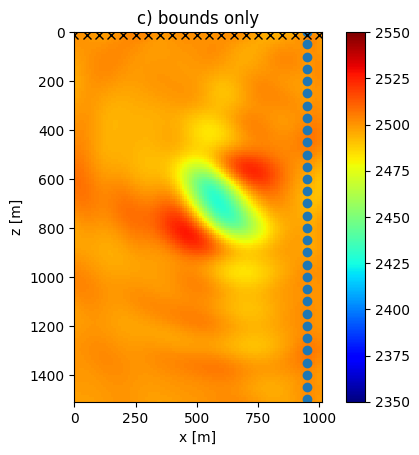}}
\subfloat[]{\includegraphics[width=0.330\hsize]{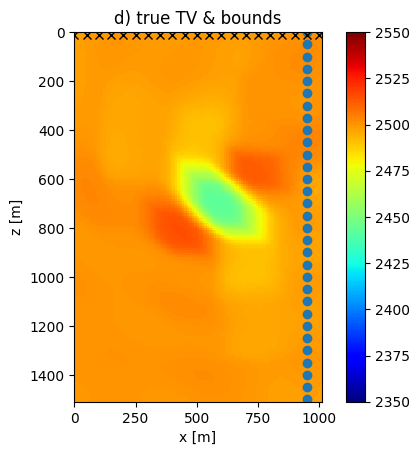}}
\subfloat[]{\includegraphics[width=0.330\hsize]{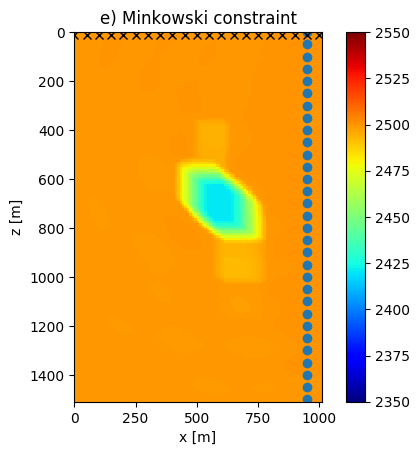}}
\caption{The true model for the data generation for the full-waveform
inversion $1$ example, the initial guess for parameter estimation, and
the model estimates with various constraints. Crosses and circles
indicate receivers and sources, respectively.}\label{Fig:FWI-results}
\end{figure}

While there are other ways to fix a background model and invert for an
anomaly, this example illustrates that our proposed regularization
approach incorporates information on the sign of an anomaly conveniently
and the constraints remain convex. It is straightforward to change and
add constraints on each component, also in the more realistic situation
that the background is not known and should not be fixed, as we show in
the following example.

\subsection{Seismic full-waveform inversion
2}\label{seismic-full-waveform-inversion-2}

This time, the challenge is to estimate a model
(Figure~\ref{Fig:FWI-results2}a) that has both a background and an
anomaly component that are very different from the initial guess
(Figure~\ref{Fig:FWI-results2}b). This means we can no longer fix one
of the two components of the generalized Minkowski sum.

The experimental setting is a bit different from the previous example.
The sources are in one borehole, the receivers in another borehole at
the other side of the model (cross-well full-waveform inversion). Except for a single high-contrast anomaly, the velocity is increasing monotonically, both gradually and discontinuously. The prior
knowledge that we assume consists of \emph{i)} upper and lower bounds on the
velocity and also on the anomaly \emph{ii)} the model is relatively simple in the sense that we
assume it has a rank of at most five \emph{iii)} the background
parameters are increasing monotonically with depth \emph{iv)} the
background is varying smoothly in the lateral direction \emph{v)} the
size of the anomaly is not larger than one fifth of the height of the
model and not larger than one third of the width of the model. We do not
assume prior information on the total-variation of the model, but for
comparison, we show the result when we use the true total-variation as a constraint. The following sets formalize the aforementioned prior knowledge:

\begin{enumerate}
\def\labelenumi{\arabic{enumi}.}
\itemsep1pt\parskip0pt\parsep0pt
\item
  $\mathcal{F}_1 = \{ x \: | \: 2350 \leq x[i] \leq 2850 \}$
\item
  $\mathcal{F}_2 = \{ x \: | \: \| ( (D_z \otimes I_x)^T \: (I_z \otimes D_x)^T )^T x \|_1 \leq \sigma \}$
\item
  $\mathcal{F}_3 = \{ x \: | \: \operatorname{rank}(x) \leq 5 \}$
\item
  $\mathcal{D}_1 = \{ x \: | \: 2350 \leq x[i] \leq 2850 \}$
\item
  $\mathcal{D}_2 = \{ u \: | \: 0 \leq (D_z \otimes I_x) u \leq \infty \}$
\item
  $\mathcal{D}_3 = \{ u \: | \: -0.1 \leq (I_z \otimes D_x) u \leq 0.1 \}$
\item
  $\mathcal{E}_1 =  \{ v \: | \: 300 \leq v[i] \leq 350 \}$
\item
  $\mathcal{E}_2 = \{ v \: | \: \operatorname{cardinality}(v) \leq (n_z / 5 \times n_x/3) \}$
\end{enumerate}

As before, the sets $\mathcal{F}_k$ act on the sum of components,
$\mathcal{D}_i$ describe component one (background), and $\mathcal{E}_j$
constrain the other component (anomaly). This collection of prior knowledge is quite specific, but we use it to illustrate how a generalized Minkowski set conveniently includes all this information. Figure~\ref{Fig:FWI-results2}c
shows the model $m$ found by SPG applied to the problem
$\min_m f(m) \:\: \text{s.t.} \:\: m \in \mathcal{F}_1$. We see
oscillatory features in the result with bound constraints only, but the
main issue is the appearance of a low-velocity artifact, located just below the true anomaly. Figure~\ref{Fig:FWI-results2}d shows that even if we know the correct total-variation, the result is
less oscillatory than using just bound constraints, but still shows an
erroneous low-velocity anomaly. When we also include the rank
constraint, i.e., we use the set
$\mathcal{F}_1 \bigcap \mathcal{F}_2 \bigcap \mathcal{F}_3$, the result
does not improve (Figure~\ref{Fig:FWI-results2}e). The generalized
Minkowski set
$\big( \bigcap_{k=1}^3 \mathcal{F}_k \big) \bigcap \big(\bigcap_{i=1}^3 \mathcal{D}_i + \bigcap_{j=1}^2 \mathcal{E}_j \big)$
does not yield a result with the large incorrect low-velocity artifact
just below the correct high-velocity anomaly (Figure \ref{Fig:FWI-results2}g),
even though we did not include information on the sign of the anomaly as
we did in the previous example. There are still two smaller horizontal
and vertical artifacts. Overall, the Minkowski set based constraint
results in the best background and anomaly estimation.

\begin{figure}
\centering
\captionsetup[subfigure]{labelformat=empty}
\subfloat[]{\includegraphics[width=0.330\hsize]{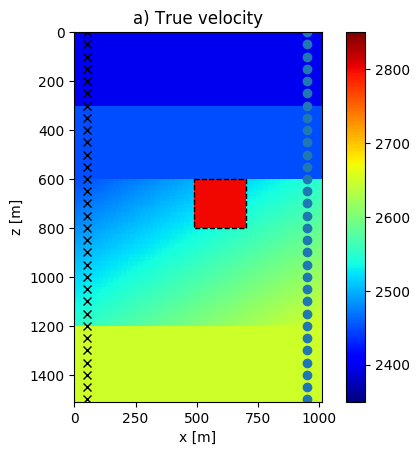}}
\subfloat[]{\includegraphics[width=0.330\hsize]{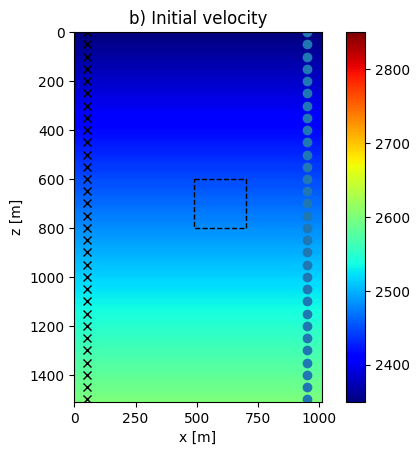}}
\subfloat[]{\includegraphics[width=0.330\hsize]{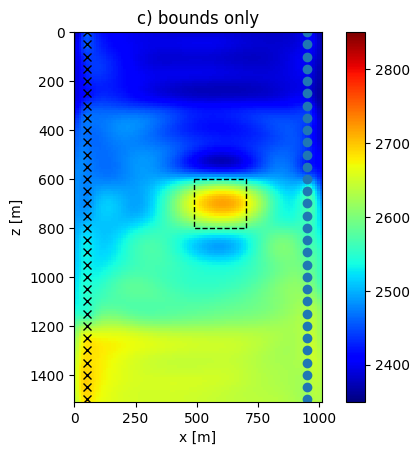}}
\\
\subfloat[]{\includegraphics[width=0.330\hsize]{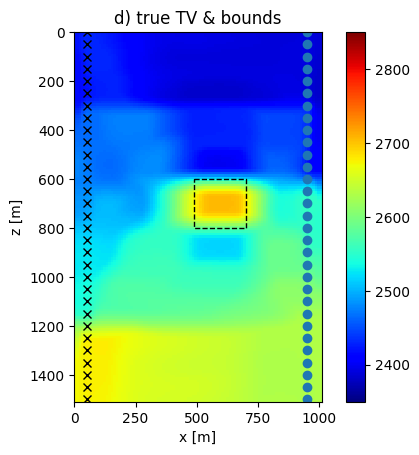}}
\subfloat[]{\includegraphics[width=0.330\hsize]{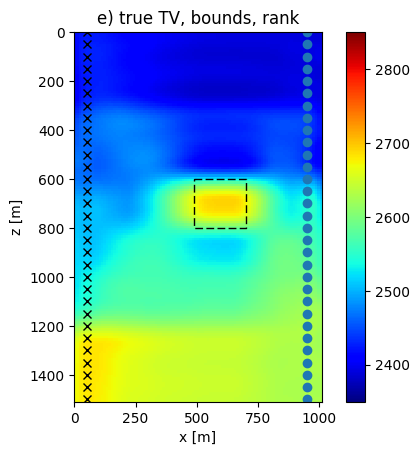}}
\subfloat[]{\includegraphics[width=0.330\hsize]{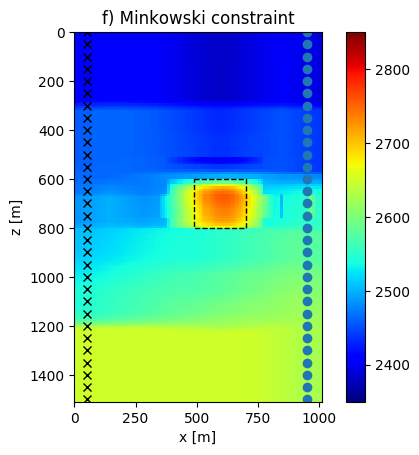}}
\caption{The true and initial models corresponding to the full-waveform
inversion $2$ example. Figure shows parameter estimation results with
various intersections of sets, as well as the result using a generalized
Minkowski constraint set. Only the result obtained with the generalized
Minkowski set does not show an incorrect low-velocity
anomaly.}\label{Fig:FWI-results2}
\end{figure}

This example shows that the generalized Minkowski set allows for the
inclusion of prior knowledge on the two (or more) different components,
as well as their sum. The results show that this leads to improved model
estimates. Information that we may have about background or anomaly
is often difficult or impossible to include in an inverse problem as a
single constraint set or intersection of multiple sets, but it fits the summation structure of the generalized Minkowski set. In
many practical problems, we do have some information about an anomaly.
When looking for air or water filled voids and tunnels in engineering or
archeological geophysics, we know that the acoustic wave propagation
velocity is usually lower than the background and we also have at least
a rough idea about the size of the anomaly. In seismic hydrocarbon
exploration, there are high-contrast salt structures in the subsurface,
almost always with higher acoustic velocity than the surrounding
geology.

\subsection{Video processing}\label{video-processing}

Background-anomaly separation of a tensor $T \in \mathbb{R}^{n_x \times n_y \times n_t}$, where $x$ and $y$ are
the two spatial coordinates and $t$ is the time, is a common problem in video processing. The separation problem
is often used to illustrate robust principal component analysis (RPCA),
and related convex and non-convex formulations of sparse + low-rank
decomposition algorithms
\citep[e.g.,][]{Candes:2011:RPC:1970392.1970395, NIPS2014_5430, rpca_ncvx_approx, arXiv170202241D_2017}.

Here we show that the generalized Minkowski set for an
inverse problem, proposed in Equation~\eqref{Minkowski_general}, describes prior knowledge that is obtained from an example, as well as simple human intuition about a background-anomaly separation problem in video
processing. To include multiple pieces of prior knowledge,
we choose to work with the video in tensor format and use the
flexibility of our regularization framework to impose constraints on the
tensor, as well as on individual slices and fibers. This is different
from RPCA approaches that matricize or flatten the video tensor to a
matrix of size $n_x n_y \times n_t$, such that each column of the matrix
is a vectorized time-slice
\citep{Candes:2011:RPC:1970392.1970395, NIPS2014_5430, rpca_ncvx_approx},
and also differs from tensor-based RPCA methods that work with a tensor
only, e.g., \citep{Zhang_2014_CVPR, Wang_2015_ICCV_Workshops}. 

Beyond the basic decomposition problem, the escalator video comes with
some additional challenges. There is a dynamic background component (the steps of the escalator), and there are reflections of people in the glass that
are weak anomalies and duplicates of persons. The video contains noise
and part of the background pixel intensity changes significantly ($55$
on a $0-255$ grayscale) over time. We subtract the mean of each time frame as
a pre-processing step to mitigate the change in intensity. Below, we
describe simple methods to derive prior knowledge for the video background component and how to translate observations and human intuition to a generalized Minkowski set.

\textbf{Constraint sets for background.} We assume that the last $20$ time frames, that do not contain people, are available to derive constraints for the background. From these frames, we use the minimum and maximum value
for each pixel over time as the bounds for the background component,
denoted as set $\mathcal{D}_1$. This is one constraint defined per time fiber. The second constraint is the subspace
spanned by the last $20$ frames. We require that each time frame of the
background be a linear combination of the training frames organized as a
matrix $S \in \mathbb{R}^{n_x n_y \times 20}$, where each column is a
vectorized video frame of $T$. We denote this constraint as
$\mathcal{D}_2 = \{ u \: | \: u = S c, \: c \in \mathbb{R}^{20} \}$, with
coefficient vector $c$, which we obtain during the projection operation:
$\mathcal{P}_{\mathcal{D}_2}(u) = S(S^T S)^{-1}S^T u$. After computing
the singular value decomposition $S=U \Sigma V^T$, the projection
simplifies to $\mathcal{P}_{\mathcal{D}_2}(u) = U U^T u$

\textbf{Constraint sets for sum of components.} We constrain the sum of
the background and anomaly components to the interval of grayscale
values $[0 - 255 ]$ minus the mean of each time-frame, denoted as set
$\mathcal{F}_1$.

\textbf{Constraint sets for anomaly.} There are bound constraints on the anomaly component that we define as the bounds
on the sum minus the bounds on the background. We denote this set as $\mathcal{E}_1$. To enhance the quality of
the anomaly component, we add various types of sparsity constraints. Non-convex sets and set definitions per tensor slice translate simple intuition into constraints. The first type of sparsity constraint is the set
$\mathcal{E}_2 = \{ T \: | \: \operatorname{card}(T_{\Omega_i}) \leq (n_x/4 \times n_y/4) \: \forall i \in \{1,2,\dots,n_t\} \}$
where $T_{\Omega_i}$ is a time slice of the video tensor. This
constraint limits the number of anomaly pixels in each frame to $1/16$
of the total number of pixels in each time slice. The second and third
constraint sets are limits on the vertical and horizontal derivative of
each time-frame image separately. If we assume the prior knowledge that
there are no more than ten persons in the video at each time, we can use
$\mathcal{E}_3 = \{ T \: | \: \operatorname{card}((I_x \otimes D_y) \operatorname{vec}(T_{\Omega_i})) \leq 480, \: i \in \{1,2,\dots,n_t\}\}$,
based on the rough estimate of
$10 \: \text{persons} \times 12 \: \text{pixels wide} \times 4 \: \text{boundaries}$
(the four vertical boundaries are background - head - upper body - legs
- background). Similarly for the horizontal direction, we define
$\mathcal{E}_4 = \{ T \: | \: \operatorname{card}((D_x \otimes I_y) \operatorname{vec}(T_{\Omega_i})) \leq 440, \: i \in \{1,2,\dots,n_t\}\}$.

Putting it all together, we project the video onto the generalized
Minkowski set, i.e., we solve
\begin{equation}
\min_x \frac{1}{2} \| x - \operatorname{vec}(T) \|_2^2 \:\: \text{s.t.} \:\: x \in \mathcal{F}_1 \bigcap \big(\bigcap_{i=1}^2 \mathcal{D}_i + \bigcap_{j=1}^4 \mathcal{E}_j \big)
\label{video_prob}
\end{equation}
 using the iterations derived in \eqref{PMinkowski_iters}. Our
formulation implies that the projection of a vector is always the sum of
the two components, but this does not mean that $x$ is equal to $T$ at
the solution, because we did not include a constraint on $x$ that says
we need to fit the data accurately. We did not use a data-fit constraint
because it is not evident how tight we want to fit the data
or how much noise there is. By computing the projection of the original
video, we still include a sense of proximity to the observed data.

\begin{figure}
\centering
\includegraphics[width=1.000\hsize]{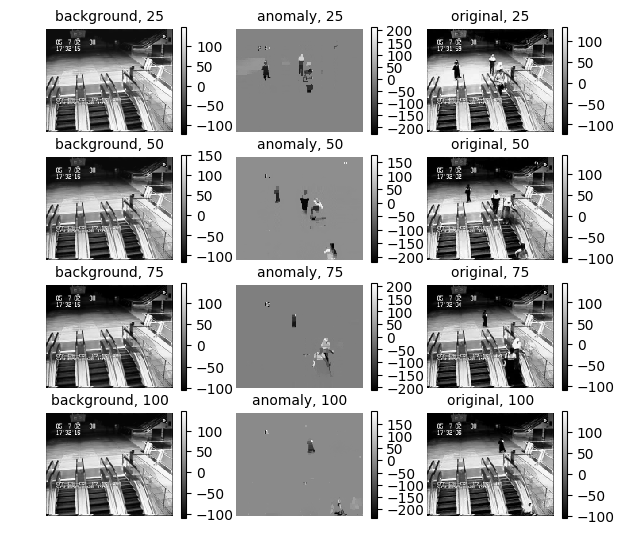}
\caption{Results of the generalized Minkowski decomposition applied to
the escalator video. The figure shows four frames. The most noticeable
artifacts are in the time stamp.}\label{Fig:escalator}
\end{figure}

The result of the generalized Minkowski decomposition of the video shown
in Figure~\eqref{Fig:escalator}, is visually better than the six methods compared by \citet{arXiv170202241D_2017}. The compared results often
show blurring of the escalator steps in the estimated background,
especially when a person is on the escalator. Several results also show
visible escalator structure in the anomaly estimation. Our simple
approach does not suffer from these two problems. We do not need to
estimate any penalty or trade-off parameters but rely on constraint sets based on intuition or whose parameters we can observe directly from a few training frames.  We were able to conveniently mix constraints on
slices and fibers of the tensor by working with the constrained
formulation.

\section{Discussion}\label{discussion}

So far, we described the concepts and algorithms for the case of a Minkowski sum with only two components. Our approach can handle more than two components, but the linear systems in Equation~\eqref{system-mat} will become larger. A
better solver than plain conjugate-gradients can mitigate increased
solutions times due to larger linear systems, possibly by taking the block structure into account. 

Another potential issue related to Minkowski sums of more than two components is that it will be less intuitive what type of solutions are in the Minkowski sum of sets. We can regain some intuition about the generalized Minkowski set by looking at sampled elements from the set. Samples are simply obtained by projecting vectors (possible solutions, reference models, random vectors, \ldots{}) onto the target set.

\section{Conclusions}\label{conclusions}

Inverse problems for physical parameter estimation and image and video processing often encounter model parameters with complex structure, so that it is difficult to describe the expected model parameters with a single set or intersection of multiple sets. In these situations, it may be easier to work with an additive model description where the model is the sum of morphologically distinct components.

We presented a regularization framework for inverse problems with the Minkowski set at its core. The additive structure of the Minkowski set allows us to enforce prior knowledge in the form of separate constraints on each of the model components. In that sense, our work differs from current approaches that rely on additive penalties for each component. As a result, we no longer need to introduce problematic trade-off parameters.

Unfortunately, the Minkowski set by itself is not versatile enough for physical parameter estimation because we also need to enforce bound constraints and other prior knowledge on the sum of the two components to ensure physical feasibility. Moreover, we would like to use more than one constraint per component to incorporate all prior knowledge that we may have available.

To deal with this situation, we proposed a generalization of the Minkowski set by defining it as the intersection of a Minkowski set with another intersection of constraint sets on the sum of the components. Each of the two or more components of the Minkowski set is also an intersection of sets. With this construction, we can enforce
multiple constraints on the model parameters, as well as multiple
constraints on each component. The generalized Minkowski set is convex if all contributing sets are convex. 

To solve inverse problems with these constraints, we discuss how to project onto generalized Minkowski sets based on the alternating direction method of multipliers. The projection enables projected-gradient based method to minimize nonlinear functions subject to constraints. We also showed that for linear inverse problems, the linear forward operator fits in the projection computation directly as a data-fitting constraint. This makes the inversion faster if the application of the forward operator does not take much time.

Numerical examples show how the generalized Minkowski set helps to solve non-convex seismic parameter estimation problems and a background-anomaly separation task in video processing, given prior knowledge on the model parameters, as well as on the components. The proposed regularization framework thus combines the benefits of constrained problem formulations and additive model descriptions. The algorithms and accompanying software are suitable for toy problems, as well as for larger problems such as videos on 3D grids. 

\bibliographystyle{abbrvnat}
\bibliography{bib_bas,bib_bas_mink}

\end{document}